\DeclareSymbolFont{rsfscript}{OMS}{rsfs}{m}{n}
\DeclareSymbolFontAlphabet{\mathrsfs}{rsfscript}
\theoremstyle{plain}
\newtheorem{theorem}{Theorem}[section]
\newtheorem{lemma}[theorem]{Lemma}
\newcommand{\sa}{synchronizing automata}
\newcommand{\sza}{synchronizing $0$-au\-to\-mata}
\newcommand{\san}{synchronizing automaton}
\newcommand{\szan}{synchronizing $0$-au\-to\-ma\-ton}
\theoremstyle{remark}
\begin{document}
\title{A new lower bound for reset threshold\\
of synchronizing automata with sink state}
\author{D. S. Ananichev\thanks{The author acknowledges support by the Russian Foundation for Basic Research, grant no.\ 16-01-00795,
and the Competitiveness Program of Ural Federal University.}}
\date{Institute of Mathematics and Computer Science\\
Ural Federal University\\ 620000 Ekaterinburg, RUSSIA\\
{\small\textsl{D.S.Ananichev@urfu.ru}}}
\maketitle

\thispagestyle{empty}

\begin{abstract}
We present a new series of examples of binary slowly synchronizing automata with sink state.
The reset threshold of the $n$-state automaton in this series is $\frac{n^2}{4}+2n-9$. 
This improves on the previously known lower bound for the maximum reset threshold
of binary synchronizing $n$-state automata with sink state.
\end{abstract}

\section{Background and motivation}

Let $\mathrsfs{A}=\langle Q,\Sigma,\delta\rangle$ be a deterministic finite automaton (DFA, for short) with the state set $Q$, the input alphabet $\Sigma$, and the transition function $\delta:Q\times \Sigma\to Q$. If $|\Sigma|=2$ then we refer to this automaton as a \emph{binary} DFA. The action of the letters in $\Sigma$ on the states in $Q$ defined via $\delta$ extends in a natural way to an action of the words in the free $\Sigma$-generated monoid $\Sigma^*$; the latter action is still denoted by $\delta$.  For any $w\in\Sigma^*$ and $X\subseteq Q$, we set $\delta(X,w)=\{\delta(q,w)\mid q\in X\}$. Sometimes we write $X.w$ for $\delta(X,w)$.

A DFA $\mathrsfs{A}=\langle Q,\Sigma,\delta\rangle$ is said to be \emph{synchronizing} if there is a word $w\in\Sigma^*$ such that $|\delta(Q,w)|=1$. The word $w$ is then called a \emph{synchronizing} or \emph{reset} word for $\mathrsfs{A}$. The minimum length of reset words for a \san\ $\mathrsfs{A}$ is called the \emph{reset threshold} of $\mathrsfs{A}$ and is denoted by $rt(\mathrsfs{A})$. The reset threshold of a class $\mathcal{C}$ of \sa\ is defined as $rt(\mathcal{C}):=\max\{rt(\mathrsfs{A})\mid \mathrsfs{A}\in\mathcal{C}\}$.

\v{C}ern\'{y}~\cite{Ce64} constructed for each positive integer $n$ an $n$-state binary \san\ with reset threshold $(n-1)^2$. The famous
\emph{\v{C}ern\'{y} conjecture} claims the optimality of this construction, that is, $(n-1)^2$ is conjectured to be the precise value for the reset threshold for the class \sa\ with $n$ states. The conjecture remains open for more than 50 years and is arguably the most longstanding open problem in the combinatorial theory of finite automata.

Upper bounds within the confines of the \v{C}ern\'{y} conjecture have been obtained for the reset thresholds of some special classes
of \sa, see, e.g.,~\cite{Ep90,Ry97,Du98,Ka03,AV04a,AV04b,AV05,Tr-JALC}. One of these classes is the class of automata with sink state. A state $z$ of a DFA $\mathrsfs{A}=\langle Q,\Sigma,\delta\rangle$ is said to be a \emph{sink state} (or \emph{zero}) if $\delta(z,a)=z$ for all $a\in\Sigma$. It is clear that a \san\ may have at most one sink state and each word that resets a \san\ possessing sink state must bring all states to sink state. We refer to \sa\ with sink state as \emph{\sza}.

A rather straightforward argument shows that every $n$-state \szan\ can be reset by a word of length $\frac{n(n-1)}{2}$, see, e.g.,~\cite{Ry97}. This upper bound is in fact tight because, for each $n$, there exists a \szan\ with $n$ states and $n-1$ input letters which cannot be reset by any word of length less than $\frac{n(n-1)}{2}$. Such an automaton\footnote{We were not able to trace the origin of this series of \sa. It is contained, for instance, in~\cite{Ry97} but it should have been known long before~\cite{Ry97} since a very similar series had appeared already in~\cite{KW71}.} is shown in Fig.~\ref{oldexample} where $\Sigma:=\{a_1,\dots,a_{n-1}\}$.
\begin{figure}[th]
\begin{center}
\unitlength=1mm
\begin{picture}(134,25)(3,0)
\node(A)(10,10){0} \node(B)(32,10){1} \node(C)(54,10){2}
\node(D)(76,10){$3$}
\multiput(85,10)(3,0){3}{\circle*{.75}}
\node(E)(100,10){$n{-}2$}\node(F)(122,10){$n{-}1$}
\drawedge(B,A){$a_1$} \drawedge(C,B){$a_2$}\drawedge(B,C){}
\drawedge(D,C){$a_3$}\drawedge(C,D){}
\drawedge(F,E){$a_{n-1}$}\drawedge(E,F){}
\drawloop[ELpos=50,loopangle=90](A){\footnotesize $\Sigma$}
\drawloop[ELpos=50,loopangle=90](B){\footnotesize $\Sigma \!\setminus\! \{a_1,a_2\}$}
\drawloop[ELpos=50,loopangle=90](C){\footnotesize $\Sigma \!\setminus\! \{a_2,a_3\}$}
\drawloop[ELpos=50,loopangle=90](D){\footnotesize $\Sigma \!\setminus\! \{a_3,a_4\}$}
\drawloop[ELpos=50,loopangle=90](E){\footnotesize $\Sigma \!\setminus\! \{a_{n{-}2},a_{n{-}1}\}$}
\drawloop[ELpos=50,loopangle=90](F){\footnotesize $\Sigma \!\setminus\! \{a_{n{-}1}\}$}
\end{picture}
\caption{A 0-automaton whose reset threshold is $\frac{n(n-1)}{2}$}
\label{oldexample}
\end{center}
\end{figure}
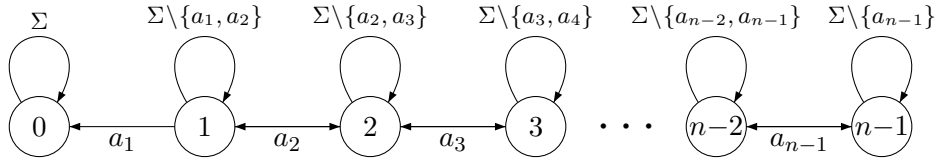

An essential feature of the example in Fig.~\ref{oldexample} is that the input alphabet size grows with the number of states. This contrasts with the aforementioned examples due to \v{C}ern\'{y}~\cite{Ce64} in which the alphabet is independent of the state number
and leads to the following natural problem: \emph{to determine the reset threshold of $n$-state \sza\ over a fixed input alphabet}.
To the best of our knowledge, this problem remains open. It appears to be of independent interest and has some connection with some questions of formal language theory related to so-called complete sets of words, see~\cite{Pri11}. 

Up to now, the best lower bound for the reset threshold of binary \sza\ with $n$ states that holds for arbitrarily large $n$  has been found by Martyugin~\cite{Ma08}. Namely, he has constructed, for every $n\geq 8$, a binary \szan\ $\mathrsfs{M}_n$ with $n$ states such that $rt(\mathrsfs{M}_n)=\left\lceil\frac{n^2+6n-16}4\right\rceil$. Besides that, Martyugin provided an isolated example of a 10-state \szan\ with reset threshold 37, thus exceeding $\frac{10^2+6\cdot 10-16}4=36$. Vorel in his thesis~\cite{Vo15} has extended Martyugin's example to a series of \sza\ $\mathrsfs{V}_j$ with $n=12j-2$ of states and conjectured \cite[Conjecture 2.13]{Vo15} that $rt(\mathrsfs{V}_j)=\frac14n^2+\frac32n-3$. Observe that for even $n$, one has $\left\lceil\frac{n^2+6n-16}4\right\rceil=\frac14n^2+\frac32n-4$ so that the validity of Vorel's conjecture would improve Martyugin's bound just by 1. The conjecture was computationally verified in ~\cite{Vo15} for $j=1,2,3,4,5,6$, the case $j=1$ corresponding to Martyugin's 10-state example. 

In this paper, we use a neat idea from~\cite{Vo15} to provide a new series of binary \sza\ with $n\ge16$ states, $n\equiv4\pmod{12}$. The reset threshold of the $n$-th DFA in our series equals $\frac14n^2+2n-9$, thus improving on both the bound established in~\cite{Ma08} and the one conjectured in~\cite{Vo15}.

\section{Appending tails to almost permutation automata}

We reproduce here the description of Martyugin's series of examples which provides the lower bound $\left\lceil\frac{n^2+6n-16}4\right\rceil$ for the reset threshold of binary \sza\ with $n$ states. We restrict ourselves to the case of even $n=2m\geq 8$; this is sufficient to explain our approach. 

Let $\mathrsfs{M}_{2m}$ be the DFA $(Q,\{a,b\},\delta)$, where $Q=\{0,\ldots,2m-1\}$ and the transition function $\delta$ is defined as follows:
\[
\delta (i,a)= \left\{\begin{aligned}
 0 &\ \text{ if }\ {i=0}, \\
 {i-1} &\ \text{ if }\ {i=1,\ldots,m-1}, \\
 {2m-2} &\ \text{ if }\ {i=m}, \\
 {i-1} &\ \text{ if }\ {i=m+1,\ldots,2m-2}, \\
 {2m-1} &\ \text{ if }\ {i=2m-1}; \\
\end{aligned} \right.
\]
\[
\delta (i,b)=\left\{\begin{aligned}
 0 &\ \text{ if }\ {i=0}, \\
 m &\ \text{ if }\ {i=1,\ldots, m-1}, \\
 {m-1} &\ \text{ if }\ {i=m}, \\
 {2m-1} &\ \text{ if }\ {i=m+1}, \\
 {i+1} &\ \text{ if }\ {i=m+2,\ldots, 2m-3}, \\
 {m+1} &\ \text{ if }\ {i=2m-2}, \\
 {m+2} &\ \text{ if }\ {i=2m-1}. \\
\end{aligned} \right.
\]
The automaton is shown in Fig.~\ref{A-even}. Clearly, 0 is the sink state of $\mathrsfs{M}_{2m}$.

\begin{figure}[th]
\begin{center}
\unitlength 0.7mm
\begin{picture}(180,85)(0,-5)
\gasset{Nw=14,Nh=14} \node(A0)(20,70){0} \node(A1)(50,70){1}
\node(A2)(80,70){2} \multiput(99,70)(6,0){3}{\circle*{1}}
\node(A3)(130,70){$m\!\!-\!\!2$} \node(A4)(160,70){$m\!\!-\!\!1$}
\node(B0)(35,25){$2m\!\!-\!\!1$} \node(B1)(135,40){$2m\!\!-\!\!2$}
\node(B2)(135,10){$2m\!\!-\!\!3$}
\multiput(118,3)(4,1.8){3}{\circle*{1}}
\node(B3)(110,0){$m\!\!+\!\!3$} \node(B4)(85,10){$m\!\!+\!\!2$}
\node(B5)(85,40){$m\!\!+\!\!1$} \node(B6)(110,50){$m$}
\drawloop[ELpos=50,loopangle=-90](A0){\textbf{$a,b$}}
\drawloop[ELpos=50,loopangle=-180](B0){\textbf{$a$}}
\drawedge(A4,B6){\textbf{$b$}}
\drawedge(B6,A4){}
\gasset{curvedepth=-2}
\drawedge(A1,A0){\textbf{$a$}}\drawedge(A2,A1){\textbf{$a$}}\drawedge(A4,A3){\textbf{$a$}}
\gasset{curvedepth=2}
\drawedge(B1,B2){\textbf{$a$}}\drawedge(B3,B4){\textbf{$a$}}
\drawedge(B4,B5){\textbf{$a$}}\drawedge(B5,B6){\textbf{$a$}}
\drawedge(B6,B1){\textbf{$a$}}
\thicklines \gasset{curvedepth=-2}
\drawedge[ELside=r](A1,B6){\textbf{$b$}}\drawedge(A2,B6){\textbf{$b$}}
\drawedge(B5,B0){\textbf{$b$}}\drawedge(B0,B4){\textbf{$b$}}
\drawedge(A3,B6){\textbf{$b$}}
\gasset{curvedepth=2}
\drawedge(B4,B3){\textbf{$b$}}\drawedge(B2,B1){\textbf{$b$}}
\drawedge(B1,B5){\textbf{$b$}}
\end{picture}
\caption{The automaton $\mathrsfs{M}_{2m}$}
\label{A-even}
\end{center}
\end{figure}

It is easy to see that the automaton $\mathrsfs{M}_{2m}$ consists of two parts: the ``body'' formed by the states in
$\{m,m+1,\dots ,2m-1\}$ and the ``tail'' formed by the states in $\{0,1,\dots ,m-1\}$.

Now we describe Vorel's concept of appending a {\it tail\/} of length $k$ to an automaton with sink state. Let $\mathrsfs{A}=\langle Q,\{a,b\},\delta\rangle$ be a binary DFA with sink state $q_0\in Q$. Then for each $k\geq 0$ and each $r\in Q$, the expression
$\mathrsfs{A}(k,r)$ stands for the following automaton $\langle Q',\{a,b\},\delta'\rangle$ with $k$ additional states:
\[
Q'=Q\cup\{t_0,t_1,\dots ,t_{k-1}\},
\]

\begin{gather*}
\delta' (s,a)= \left\{\begin{aligned}
\delta (s,a)  &\ \text{ if }\ {s\in Q\setminus\{q_0\}}, \\
 {t_{k-1}} &\ \text{ if }\ {s=q_0}, \\
 {t_0} &\ \text{ if }\ {s=t_0}, \\
 {t_{i-1}} &\ \text{ if }\ {s=t_i,\ i\in\{1,2,\ldots,k-1\}}, 
\end{aligned} \right.\\[1ex]
\delta (s,b)=\left\{\begin{aligned}
\delta (s,b) &\ \text{ if }\ {s\in Q\setminus\{q_0\}}, \\
 {t_0} &\ \text{ if }\ {s=t_0}, \\
 {r} &\ \text{ otherwise, }
\end{aligned} \right.
\end{gather*}
for each $s\in Q'$.  Observe that $t_0$ is a unique sink state of $\mathrsfs{A}(k,r)$.

It is easy to realize that Martyugin's automaton $\mathrsfs{M}_{2m}$ in Fig.~\ref{A-even} fits under the framework of the above construction; namely, it coincides with the automaton $\mathrsfs{A}_{2m}(m-1,m)$, where $\mathrsfs{A}_{2m}$ is the automaton in Fig.~\ref{A-even2}.

\begin{figure}[th]
\begin{center}
\unitlength 0.7mm
\begin{picture}(180,85)(0,-5)
\gasset{Nw=14,Nh=14}
\node(A4)(160,70){$m\!\!-\!\!1$}
\node(B0)(35,25){$2m\!\!-\!\!1$}
\node(B1)(135,40){$2m\!\!-\!\!2$}
\node(B2)(135,10){$2m\!\!-\!\!3$}
\multiput(118,3)(4,1.8){3}{\circle*{1}}
\node(B3)(110,0){$m\!\!+\!\!3$} \node(B4)(85,10){$m\!\!+\!\!2$}
\node(B5)(85,40){$m\!\!+\!\!1$} \node(B6)(110,50){$m$}
\drawloop[ELpos=50,loopangle=-90](A4){\textbf{$a,b$}}
\drawloop[ELpos=50,loopangle=-180](B0){\textbf{$a$}}
\drawedge(B6,A4){\textbf{$b$}}
\gasset{curvedepth=-2}
\gasset{curvedepth=2}
\drawedge(B1,B2){\textbf{$a$}}\drawedge(B3,B4){\textbf{$a$}}
\drawedge(B4,B5){\textbf{$a$}}\drawedge(B5,B6){\textbf{$a$}}
\drawedge(B6,B1){\textbf{$a$}}
\thicklines \gasset{curvedepth=-2}
\drawedge(B5,B0){\textbf{$b$}}\drawedge(B0,B4){\textbf{$b$}}
\gasset{curvedepth=2}
\drawedge(B4,B3){\textbf{$b$}}\drawedge(B2,B1){\textbf{$b$}}
\drawedge(B1,B5){\textbf{$b$}}
\end{picture}
\caption{The automaton $\mathrsfs{A}_{2m}$}
\label{A-even2}
\end{center}
\end{figure}

We call a synchronizing binary DFA $(Q,\{a,b\},\delta)$ with sink state $q_0\in Q$ an {\it almost permutation automaton\/} if it fulfils the following three conditions:\\
1. There is a unique state $r\in Q\setminus\{q_0\}$ such that  $\delta(r,b)=q_0$; we refer to $r$ as a \emph{pre-sink} state.\\
2. The letter $b$ acts as a permutation on the set $Q\setminus\{r\}$.\\
3. The letter $a$ acts as a permutation on $Q$.\\
We use the expression $\mathrsfs{A}_{(q_0,r)}$ to denote an almost permutation automaton with sink state $q_0$ and pre-sink state $r$.
We call the least $k$ such that $a^k$ acts as the identity permutation the \emph{order} of $a$. Clearly, the order of $a$ is the least common multiple of the lengths of cycles with respect to $a$.

Observe that the automaton $\mathrsfs{A}_{2m}$ in Fig.~\ref{A-even2} is an almost permutation automaton.

Now we reproduce Vorel's lemma \cite[Lemma 2.14]{Vo15} about adding a tail to an almost permutation automaton. We have included a proof of the lemma for the sake of completeness.

\begin{lemma}
Let $\mathrsfs{A}_{(q_0,r)}=\langle Q,\{a,b\},\delta\rangle$ be an $n$-state synchronizing almost permutation automaton and let $k$ be a multiple of the order of $a$. Then $$rt(\mathrsfs{A}_{(q_0,r)}(k,r))=rt(\mathrsfs{A}_{(q_0,r)})+nk.$$
\end{lemma}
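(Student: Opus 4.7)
My plan is to prove the two matching inequalities. For the \textbf{upper bound} $rt(\mathrsfs{A}_{(q_0,r)}(k,r))\leq rt(\mathrsfs{A}_{(q_0,r)})+nk$, the starting observation is that in $\mathrsfs{A}_{(q_0,r)}(k,r)$ the word $a^k$ acts as the identity on $Q\setminus\{q_0\}$ (because $k$ is a multiple of the order of $a$ and $q_0$ is an $a$-fixed point, so $a$ restricts to a permutation of $Q\setminus\{q_0\}$ whose order divides $k$) while sending every state of $\{q_0\}\cup T$, with $T=\{t_0,\dots,t_{k-1}\}$, to $t_0$. I would take a reset word $w$ of $\mathrsfs{A}_{(q_0,r)}$ of length $\ell=rt(\mathrsfs{A}_{(q_0,r)})$, partition it as $w=v_1v_2\cdots v_{n-1}$ at the $n-1$ positions where $|Q\cdot w[1\!:\!i]|$ strictly decreases, and form $u=a^kv_1a^kv_2\cdots v_{n-1}a^k$ of length $nk+\ell$. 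A step-by-step image trace in $\mathrsfs{A}_{(q_0,r)}(k,r)$ then shows that each inserted $a^k$ block eliminates one more element of $Q\setminus\{q_0\}$ from the image, so that after the final $a^k$ block the image collapses to $\{t_0\}$.

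For the \textbf{lower bound}, let $u$ be a shortest reset word of $\mathrsfs{A}_{(q_0,r)}(k,r)$ with $|u|=L$, and write $S_i:=Q'\cdot u[1\!:\!i]$, so $S_0=Q'$ and $S_L=\{t_0\}$. A backward image analysis first forces the suffix $a^k$: the last letter cannot be $b$ because $b^{-1}(t_0)=\{t_0\}$ would make $u[1\!:\!L-1]$ already reset, contradicting minimality of $u$; applying the same argument in each of the last $k$ positions, using that $b^{-1}(\{t_0,\dots,t_j\})=\{t_0\}$ throughout (since the only image of $b$ inside $T$ is $t_0$), forces each of these letters to be $a$. Consequently $u=u'a^k$ with $S_{L-k}\subseteq a^{-k}(t_0)=\{q_0\}\cup T$. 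The next step interprets $u'$ in $\mathrsfs{A}_{(q_0,r)}$: for any $s\in Q$, the $\mathrsfs{A}_{(q_0,r)}(k,r)$-trajectory of $s$ under $u'$ agrees with its $\mathrsfs{A}_{(q_0,r)}$-trajectory up to the first $q_0$-visit, and since the only edge into $T$ in $\mathrsfs{A}_{(q_0,r)}(k,r)$ is $q_0\xrightarrow{a}t_{k-1}$, the endpoint condition $s\cdot u'\in\{q_0\}\cup T$ forces such a visit; because $q_0$ absorbs in $\mathrsfs{A}_{(q_0,r)}$, one gets $s\cdot u'=q_0$ there, so $u'$ resets $\mathrsfs{A}_{(q_0,r)}$. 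This yields the preliminary bound $L\geq\ell+k$.

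The main obstacle is strengthening this preliminary bound to $L\geq\ell+nk$. My plan is to iterate the backward image analysis inside $u'$: since $u'$ is itself the minimum-length word bringing $Q'$ into $\{q_0\}\cup T$, the same pre-image reasoning forces its last letter to be $b$ (as $a^{-1}(\{q_0\}\cup T)=\{q_0\}\cup T$ means an ending $a$ would permit shortening), and so on along the chain of target subsets $\{t_0\},\{q_0\}\cup T,\{r,t_0\},\{a^{-1}(r),t_0,t_1\},\ldots$ obtained by successive pre-image computations. Tracking the descent blocks that emerge, I would show that $u$ must contain $n$ disjoint runs of $k$ consecutive $a$'s---one ``final descent'' per state of $Q$, after accounting for trajectory merges at $q_0$---and that the interleaved non-descent letters realize a reset word of $\mathrsfs{A}_{(q_0,r)}$ in the sense of the second paragraph. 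The delicate point, and the technical heart of the proof, is to show that any merge of trajectories at $q_0$ that would eliminate one of these $n$ descents is paid for by at least $k$ additional letters of non-descent work, so that the total accounting is exactly $nk+\ell$.
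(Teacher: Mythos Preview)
Your upper bound is essentially the paper's argument: partition a shortest reset word of $\mathrsfs{A}_{(q_0,r)}$ at the $n-1$ collapse positions and interleave $n$ blocks of $a^k$. That part is fine.

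The lower bound, however, has a genuine gap. You correctly force the suffix $a^k$ and correctly observe that the remaining prefix $u'$ resets $\mathrsfs{A}_{(q_0,r)}$, giving $L\ge \ell+k$; but the passage from $k$ to $nk$ is exactly the content of the lemma, and you leave it as ``the technical heart of the proof'' without supplying it. Your proposed route---iterating a backward pre-image analysis and then doing a merge-accounting argument in which a lost $a^k$-descent is ``paid for'' by $k$ extra non-descent letters---is not how the argument goes, and I do not see how to make that accounting rigorous.

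The paper's approach is forward and per-state rather than backward and set-based. For each $s\in Q$ it takes the shortest prefix $u'_s$ of the reset word with $\delta'(s,u'_s)=t_0$; necessarily $u'_s=v'_s a^k$ with $v'_s$ empty or ending in $b$ and $\delta'(s,v'_s)=q_0$. The crucial step is that the $v'_s$ are pairwise \emph{distinct}, i.e.\ no two states of $Q$ first reach $q_0$ at the same moment. Here one uses the structure of $\mathrsfs{A}_{(q_0,r)}(k,r)$: the only merging state besides $t_0$ is $r$, and $\delta'^{-1}(r,b)=\{q_0,t_1,\dots,t_{k-1}\}$; so if two trajectories first coincided at $r$ after reading some prefix $v''b$, then at $v''$ they sit at distinct tail states $t_i,t_j$, which forces $v''$ to end simultaneously in $ba^i$ and $ba^j$ (or be $a^i$ and $a^j$), a contradiction. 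With the $v'_s$ distinct, the $n$ occurrences of $a^k$ following them are disjoint, and deleting these $n$ blocks yields a word that still resets $\mathrsfs{A}_{(q_0,r)}$ (the deleted $a^k$'s act as the identity on $Q\setminus\{q_0\}$), giving $|u|\ge rt(\mathrsfs{A}_{(q_0,r)})+nk$.

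So the missing idea is not a compensation argument for merges but the observation that merges before $q_0$ are \emph{impossible}, and this is what immediately produces the $n$ disjoint $a^k$ blocks you were aiming for.
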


\begin{proof}
First, we show that $rt(\mathrsfs{A}_{(q_0,r)}(k,r))\leq rt(\mathrsfs{A}_{(q_0,r)})+nk$. Let $w$ be a reset word of $\mathrsfs{A}_{(q_0,r)}$. For each $d\geq 1$ we denote by $u_d$ the shortest prefix $u$ of $w$ satisfying $|\delta^{-1}(\{q_0\},u)|\geq d$, which means that $u$ maps at least $d$ states to $q_0$. Clearly, $u_1$ is the empty word and $u_n=w$. Denote $w=v_2v_3\dots v_n$,
where $u_d=u_{d-1}v_d$ for each $d\in\{2,3,\dots ,n\}$. Let $w'=a^kv_2a^kv_3a^k\dots v_na^k$. It is enough to show that $w'$ resets
$\mathrsfs{A}_{(q_0,r)}(k,r)$, i.e., $\delta'(s,w')=t_0$ for each $s\in Q'$. If $s\in\{t_0,t_1,\dots ,t_{k-1}\}$, we just observe that $\delta'(s,a^k)=t_0$. If $s\in Q$, let $d$ be the least integer such that $\delta(s,u_d)=q_0$. Since $\delta(s,u')\ne q_0$ for each proper prefix $u'$ of $u_d$, the definition of $\delta'$ implies that $\delta'(s,u_d)=\delta(s,u_d)=q_0$. Since $k$ is a multiple of the order of $a$, we have $\delta'(s,u_i)=\delta'(s,a^kv_2a^kv_3a^k\dots v_i)$ for every $i\in\{1,\dots ,d\}$. As $\delta'(q_0,a^k)=t_0$, we are done.

Second, we show that $rt(\mathrsfs{A}_{(q_0,r)}(k,r))\geq rt(\mathrsfs{A}_{(q_0,r)})+nk$. Let $w$ be a reset word of $\mathrsfs{A}_{(q_0,r)}(k,r)$. For each $s\in Q$ we denote by $u_s'$ the shortest prefix $u'$ of $w'$ with $\delta'(s,u')=t_0$. Observe that $u_s'=v_s'a^k$ for some $v_s'$ with $\delta'(s,v_s')=q_0$. Moreover, since $\delta'(Q',a)=Q'\setminus \{q_0\}$, each $v_s'$ either ends with $b$ or is empty.

Next, we show that $v_p'\ne v_s'$ for each distinct $p,s\in Q$. Otherwise, we have $\delta'(p,v')=\delta'(s,v')=q_0$, where $v'=v_p'=v_s'$. As $r$ is the only merging state in $\mathrsfs{A}_{(q_0,r)}(k,r)$ except for $t_0$, we have $\delta'(p,v''b)=\delta'(s,v''b)=r$, for some prefix $v''$ of $v'$ with $\delta'(p,v'')\ne \delta'(s,v'')$. Since the states $\delta'(p,v'')$ and $\delta'(s,v'')$ belong to $\delta'^{-1}(r,b)=\{t_0,t_1,\dots ,t_{k-1},q_0\}$, we can denote $t_i=\delta'(p,v'')$ and $t_j=\delta'(s,v'')$,
where $i,j\in \{0,1,\dots ,k\}$ and $t_k$ stands for $q_0$. From $t_i=\delta'(p,v'')$ it follows that $v''$ ends with $ba^i$ or equals to $a^i$. From $t_j=\delta'(s,v'')$ it follows that $v''$ ends with $ba^j$ or equals to $a^j$. As $i\ne j$, we get a contradiction.

Therefore, each of the distinct prefixes $v_s'$ of $w'$ for $s\in Q$ ends with $b$ or is empty and is followed by $a^k$. Thus, $w'$ contains at least $n$ disjoint occurrences of the factor $a^k$. Let $w$ be obtained from $w'$ by deleting these factors, so that we have $|w|\leq |w'|-nk$. It remains to show that $w$ is a reset word of $\mathrsfs{A}_{(q_0,r)}$. Choose $s\in Q$ and let $w_s'$ be the shortest prefix $u'$ of $w'$ with $\delta'(s,u')=q_0$. Since $\delta'(Q',a)=Q'\setminus \{q_0\}$, the word $w_s'$ ends with $b$ or is empty. Thus, we can consider the prefix $w_s$ of $w$ obtained by deleting the occurrences of $a^k$ from $w_s'$. Since $k$ is a multiple of the order of $a$, the word $a^k$ acts as the identity permutation on the set $Q\setminus \{q_0\}$ in both $\mathrsfs{A}_{(q_0,r)}(k,r)$ and $\mathrsfs{A}_{(q_0,r)}$. We conclude that
$$\delta(s,w_s)=\delta(s,w_s')=\delta'(s,w_s')=q_0,$$
which implies easily that $\delta(s,w)=q_0$.
\end{proof}

Thus, \emph{in order to obtain a series of binary $0$-automata with high reset threshold, it is sufficient to construct a series of almost permutation automata with high reset threshold.}

Let us make the idea just stated more precise. Suppose we have constructed a series of almost permutation automata $\mathrsfs{A}_n$ such that 
$$rt(\mathrsfs{A}_n)=A n^2+ B n+ C,$$ 
where $A$, $B$ and $C$ are some constants and $n$ is the number of states. (Of course, $0\leq A\leq 1/2$ in view of the inequality $rt(\mathrsfs{A}_n)\le\frac{n(n-1)}2$ from~\cite{Ry97}.) Then we can add tails of lengths $k=k(n)$ and obtain the series of binary $0$-automata $\mathrsfs{B}_N$ with $N=n+k$ states. If $k(n)$ is chosen to be the order of the letter $a$ in $\mathrsfs{A}_n$, then Vorel's lemma implies that $$rt(\mathrsfs{B}_N)=A n^2+ B n+ C+nk.$$

Suppose that $k=Dn+E$, where $D$ and $E$ are constants. Then,
$$rt(\mathrsfs{B}_N)=\frac{A+D}{(1+D)^2}\cdot N^2+O(N).$$
If $D=1-2A$, then the first coefficient is maximal and is equal to $\frac{1}{4(1-A)}$. This implies that if $A<1/2$, then
$rt(\mathrsfs{B}_N)$ grows faster than $rt(\mathrsfs{A}_n)$.

If we try to apply the above reasoning to Martyugin's example, we may observe that the reset threshold of the almost permutation automata
$\mathrsfs{A}_{2m}$ shown in Fig.~\ref{A-even2} grows linearly with the number of states, that is, we have to look at the special case when $A=0$ (and hence $D=1$). If $rt(\mathrsfs{A}_n)=B n+ C$ and $k=n+E$, then
$$rt(\mathrsfs{B}_N)=\frac{1}{4}\cdot N^2+\frac{B}{2}\cdot N+O(1).$$
In Martyugin's example $B=3$. Therefore, to obtain a larger lower bound for the reset threshold of binary \sza, it is sufficient to
construct a series of almost permutation automata $\mathrsfs{A}_n$ with constant order of the letter $a$ and such that $rt(\mathrsfs{A}_n)=A n^2+ B n+ C$, where $A>0$, or $A=0$ and $B>3$. We present such a construction in the next section.

\section{A series of slowly syncronizable\protect\\ almost permutation automata}

We present a series of almost permutation automata $\mathrsfs{A}_n=\langle Q,\Sigma,\delta\rangle$, where $n=7,8,\dots$ is the number
of states, such that the reset threshold for the automaton $\mathrsfs{A}_n$ is at least $4n-13$. The state set $Q_n$ of the automaton
$\mathrsfs{A}_n$ is $\{0,1,2,\dots,n-1\}$. The input alphabet $\Sigma$ of $\mathrsfs{A}_n$ consists of two letters $a$ and $b$.

The actions of the letters $a$ and $b$ on the set $Q_n$ are defined as follows:
$$
\begin{array}{l}
\delta(0,a)=\delta(0,b)=0,\\
\delta(1,a)=0,\\
\delta(1,b)=3,\ \delta(2,b)=1,\ \delta(3,b)=2,\\
\delta(2,a)=4,\ \delta(3,a)=2,\ \delta(4,a)=3,\\
\mbox{ and for the other states}\\
\delta(j,a)=
\begin{cases}
j-1 &\text{if}\ j \text{ is odd},\\
j+1 &\text{if}\ j \text{ is even and } j\not=n-1,\\
j   &\text{if}\ j \text{ is even and } j=n-1.
\end{cases}\\
\delta(j,b)=
\begin{cases}
j-1 &\text{if}\ j \text{ is even},\\
j+1 &\text{if}\ j \text{ is odd and } j\not=n-1,\\
j   &\text{if}\ j \text{ is odd and } j=n-1.
\end{cases}\\
\end{array}
$$

Fig.~\ref{reverse folding1} and~\ref{reverse folding2} show the automata $\mathrsfs{A}_8$ and $\mathrsfs{A}_9$, respectively.

\begin{figure}[th]
\begin{center}
\unitlength=1mm
\begin{picture}(134,40)(3,0)
\node(A)(112,20){$0$}
\node(B)(94,20){$1$}
\node(C)(79,5){$2$}
\node(D)(79,35){$3$}
\node(E)(64,20){$4$}
\node(F)(46,20){$5$}
\node(G)(28,20){$6$}
\node(H)(10,20){$7$}
\drawloop[loopangle=0](A){$a,b$}
\drawedge[curvedepth=0](B,A){$a$}
\drawedge[curvedepth=-3](B,D){$b$}
\drawedge[curvedepth=-3](C,B){$b$}
\drawedge[curvedepth=3](C,E){$a$}
\drawedge[curvedepth=4](D,C){$a$}
\drawedge[curvedepth=-4](D,C){$b$}
\drawedge[curvedepth=3](E,D){$a$}
\drawedge[curvedepth=3](E,F){$b$}
\drawedge[curvedepth=3](F,E){$b$}
\drawedge[curvedepth=3](F,G){$a$}
\drawedge[curvedepth=3](G,F){$a$}
\drawedge[curvedepth=3](G,H){$b$}
\drawedge[curvedepth=3](H,G){$b$}
\drawloop[loopangle=180](H){$a$}
\end{picture}
\caption{The automaton $\mathrsfs{A}_8$}
\label{reverse folding1}
\end{center}
\end{figure}

\begin{figure}[th]
\begin{center}
\unitlength=1mm
\begin{picture}(124,36)(3,0)
\node(A)(116,18){$0$}
\node(B)(100,18){$1$}
\node(C)(87,5){$2$}
\node(D)(87,31){$3$}
\node(E)(74,18){$4$}
\node(F)(58,18){$5$}
\node(G)(42,18){$6$}
\node(H)(26,18){$7$}
\node(I)(10,18){$8$}
\drawloop[loopangle=0](A){$a,b$}
\drawedge[curvedepth=0](B,A){$a$}
\drawedge[curvedepth=-3](B,D){$b$}
\drawedge[curvedepth=-3](C,B){$b$}
\drawedge[curvedepth=3](C,E){$a$}
\drawedge[curvedepth=4](D,C){$a$}
\drawedge[curvedepth=-4](D,C){$b$}
\drawedge[curvedepth=3](E,D){$a$}
\drawedge[curvedepth=3](E,F){$b$}
\drawedge[curvedepth=3](F,E){$b$}
\drawedge[curvedepth=3](F,G){$a$}
\drawedge[curvedepth=3](G,F){$a$}
\drawedge[curvedepth=3](G,H){$b$}
\drawedge[curvedepth=3](H,G){$b$}
\drawedge[curvedepth=3](H,I){$a$}
\drawedge[curvedepth=3](I,H){$a$}
\drawloop[loopangle=180](I){$b$}
\end{picture}
\caption{The automaton $\mathrsfs{A}_9$}
\label{reverse folding2}
\end{center}
\end{figure}

It is easy to see that $0$ is the sink state of the automaton $\mathrsfs{A}_n$, the letter $b$ acts as permutation on $Q_n$ and
the letter $a$ acts as permutation on $Q_n\setminus\{1\}$.  Thus, $\mathrsfs{A}_n$ is an almost permutation automaton for every $n$.

\begin{theorem}
For every $n\ge 5$, the reset threshold of the automaton $\mathrsfs{A}_n$ is at least $4n-13$.
\end{theorem}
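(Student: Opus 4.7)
My plan is to lower-bound $|w|$ for an arbitrary reset word $w$ of $\mathrsfs{A}_n$ by tracking the image sequence $S_i := \delta(Q_n, w[1\ldots i])$. The first observation is structural: $b$ is a permutation of $Q_n$ and $a$ is a permutation of $Q_n\setminus\{1\}$ with $a(1)=0$, so $|S_i|$ can strictly decrease only at an application of $a$ with $1\in S_{i-1}$. Hence $w$ must contain at least $n-1$ such \emph{absorbing} $a$'s, at positions $t_1<\cdots<t_{n-1}$, and $|w|\geq t_{n-1}$. The goal is to show $t_{n-1}\geq 4n-13$.

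To bound the gaps $t_{k+1}-t_k$, note that just after an absorbing $a$ we have $1\notin S$, and the only way to reinsert $1$ is to apply $b$ with $2\in S$. Iterating the ``who maps to whom'' relation, the refill depends on which of $2,3,4$ are currently in $S$ and eventually on the ability to migrate elements from the body $\{5,\dots,n-1\}$ downward through the alternating $a$-swaps on $(5,6),(7,8),\ldots$ and $b$-swaps on $(4,5),(6,7),\ldots$. A body element at position $q$ requires roughly $q-4$ letters to reach position~$4$, and since the swaps behave like particle-exclusion dynamics these migrations cannot all happen in parallel; this is the source of the factor~$4$ per body element in the final bound.

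The cleanest way to turn this intuition into a proof is an amortised argument via a potential function $\phi\colon 2^{Q_n}\to\mathbb{Z}_{\geq 0}$ satisfying $\phi(\{0\})=0$, $\phi(Q_n)\geq 4n-13$, and $\phi(\delta(S,c))\geq \phi(S)-1$ for every $S\subseteq Q_n$ and every letter $c\in\{a,b\}$; then $|w|\geq \phi(Q_n)-\phi(\{0\})\geq 4n-13$. A natural first attempt is
\[
\phi(S) = 4\,\bigl|S\cap\{5,\dots,n-1\}\bigr| + \sum_{i=1}^{4}\gamma_i\,[i\in S],
\]
with coefficients $\gamma_i$ chosen so that the drop of~$4$ caused by the body-shrinking transition $a\colon 4\mapsto 3$ is offset by a matching rise in the small-part indicators. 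The main obstacle is verifying the one-letter inequality $\phi(\delta(S,c))\geq \phi(S)-1$ in every case: the delicate scenarios are absorbing $a$'s when the small part is nearly empty, the interface transitions $b\colon 4\leftrightarrow 5$, and the rapid $3$-cycles $(1,3,2)$ of $b$ on $\{1,2,3\}$ and $(2,4,3)$ of $a$ on $\{2,3,4\}$, which allow a few cheap absorbings at the start of $w$ that must be charged against the accumulated body contribution. Should a purely linear $\phi$ prove insufficient, I would augment it by an extra term tracking the parity of the leftmost body state, to capture the alternating rhythm of letters that is necessary to feed body elements into the small part.
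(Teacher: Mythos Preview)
Your plan has a genuine gap: no additive (linear) potential $\phi(S)=\sum_{q}c_q[q\in S]$ with $c_0=0$ can satisfy $\phi(\delta(S,c))\ge\phi(S)-1$ for \emph{all} $S\subseteq Q_n$ and both letters while having $\phi(Q_n)$ anywhere near $4n$. Since $b$ is a permutation of $Q_n$, the worst subset for $b$ is $S=\{q:c_{b(q)}<c_q\}$, and the inequality forces $\sum_{q:c_{b(q)}<c_q}(c_q-c_{b(q)})\le 1$. On a transposition $(p,q)$ this contributes $|c_p-c_q|$, and on a $3$-cycle it contributes $\max-\min$ of the three weights. The same holds for $a$ on $Q_n\setminus\{1\}$, with an extra term $c_1$ from the merge $1\mapsto 0$. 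Because the $b$-swaps $(4,5),(6,7),\dots$ and the $a$-swaps $(5,6),(7,8),\dots$ together cover every consecutive pair in $\{4,\dots,n-1\}$, summing the two budgets gives $\sum_{j\ge4}|c_j-c_{j+1}|\le 2$; and the head constraints force $c_1\le1$, $\max(c_1,c_2,c_3)-\min\le1$, $\max(c_2,c_3,c_4)-\min\le1$. A short optimisation shows the maximum of $\phi(Q_n)$ under these constraints is $2n+O(1)$ (achieved for example by $c_1=1$, $c_j=2$ for $j\ge2$), which falls short of $4n-13$ by a factor of two. Your fallback ``parity of the leftmost body state'' is not developed enough to assess, and it is not clear how a single extra bounded term would recover the missing $2n$.

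The paper's proof is not a potential argument at all. It decomposes a shortest reset word as $W=u\,x_1\,x_2\,y$ and counts two kinds of contributions: \emph{significant factors} (occurrences of $aab$ or $abb$), one per ``essential'' state that must pass through state~$4$ on its way to~$0$, giving roughly $2n$ letters; and \emph{extra occurrences}, letters outside significant factors, which are forced because the body $\{5,\dots,n-1\}$ is invariant under $a^2$ and $b^2$ so that shrinking its maximum by $1$ costs an extra letter --- this gives another $(n-5)+(n-6)+2$ letters in $x_1$, $x_2$, $y$. These two independent $\sim2n$ counts are what produce the coefficient $4$; an additive potential can encode at most one of them. If you want to rescue the amortised approach you would need a genuinely non-additive $\phi$, for instance one that separately tracks both $|S|$ and the spread of $S\cap\{5,\dots,n-1\}$, and the verification would then essentially reproduce the paper's case analysis.
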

\begin{proof}
Let $W$ be a reset word of minimal length for the automaton $\mathrsfs{A}_n$.

We say that an occurrence of the factor $aa$ of $W$ is {\it significant\/} if the letter of the word $W$ following this occurrence is $b$. We also say that an occurrence of the factor $bb$ of $W$ is {\it significant\/} if the letter of the word $W$ preceding this occurrence is $a$. It is easy to see that different significant factors of $W$ do not overlap. If a letter occurs in the word $W$ beyond any of its significant factors, we refer to this occurrence as {\it extra occurrence}.

Let $u$ be the longest prefix of the word $W$ with the property $\{5,6,\dots ,n-1\}\subseteq Q_n.u$. Let $w$ be such that $W=uw$.
Since the transition $\delta(5,b)=4$ is the only transition from the set $\{5,6,\dots ,n-1\}$ to the set $\{0,1,2,3,4\}$, the first letter of the word $w$ is $b$. Since $\delta(4,b)=5$, the second letter of the word $w$ is $a$ and $4\notin Q_n.u$.

Consider the actions of all words of length at most 5 to the set $Q_n$. We see that if $|u|<6$ then $Q_n.u=Q_n\setminus\{1,4\}$.
(More precisely, there are only 6 words $u$ of length at most 5 with the property $4\notin Q_n.u$, namely, $abaa$, $abba$, $babaa$, $babba$, $aabaa$, $aabba$. For every word $u$ from this list, we have $Q_n.u=Q_n\setminus\{1,4\}$.)

We call a state $q\in Q_n.u$ {\it essential\/} if there is a prefix $x$ of the word $w$ such that $\delta(q,x)=4$. As we noticed above, the word $w$ starts with the factor $ba$; therefore, if $3\in Q_n.u$, then $3$ is an essential state. Every state from $\{5,6,\dots ,n-1\}$ is also essential since the transition $\delta(5,b)=4$ is the only transition from the set $\{5,6,\dots ,n-1\}$.

Consider an essential state $q\in Q_n.u$. Let $x_q$ be the longest prefix of the word $w$ such that $\delta(q,x_q)=4$. If the word $v_q$ is such that $w=x_qv_q$, the length of $v_q$ is at least $4$ since $\delta(4,v_q)=\delta(q,w)=0$.

If the first letter of $v_q$ is $b$, then $\delta(q,x_qb)=5>4$, whence there is a prefix $y$ of $w$ which is longer than $x_q$ and $\delta(q,y)=4$. This contradicts the choice of the word $x_q$. Therefore the first letter of $v_q$ is $a$. Denote the second letter of $v_q$ by $\gamma$. If the third letter of $v_q$ is $a$ then $\delta(q,x_qa\gamma a)=4$. This again contradicts the choice of the word $x_q$.

Thus, $v_q$ starts with the factor $a\gamma b$, and it contains $a^2$ or $b^2$ as a significant factor. We refer to this significant factor as the significant factor {\it corresponding to the state\/} $q$.

Let $s,t\in Q_n.u$ be two different essential states. Suppose that $x_s=x_t$. Take the longest prefix $y$ of $x_s$ such that $\delta(s,y)\ne\delta(t,y)$ and denote the next letter that occurs in $x_s$ by $\tau$. Let $s_1=\delta(s,y)$ and $t_1=\delta(t,y)$. We have $s_1\ne t_1$ and $\delta(s_1,\tau)=\delta(t_1,\tau)$. This means that $\{s_1,t_1\}=\{0,1\}$, whence $\tau=a$ and $\delta(s_1,\tau)=\delta(t_1,\tau)=0$ by the definition of the automaton $\mathrsfs{A}_n$. Hence $\delta(s,x_s)=0$, and
this contradicts the choice of the word $x_s$. Thus, $x_s\ne x_t$. This means that the significant factors corresponding to the different states are different.

So, we have shown that either $|u|=4$ and $w$ contains at least $n-4$ significant factors corresponding to the states $3,5,6,\dots ,n-1$ or $|u|\geq 6$ and $w$ contains at least $n-5$ significant factors, corresponding to the states $5,6,\dots ,n-1$. Therefore the sum of the length of the word $u$ and the lengths of all significant factors in the word $w$ is at least $2n-4$.

Let $x$ be the shortest prefix of the word $w$ such that
$$Q_n.ux\cap \{5,6,\dots ,n-1\}=\varnothing$$ 
and let the word $y$ be such that $w=xy$. (We thus have $W=uxy$.) Since $\delta(5,b)=4$ is the only transition from the set $\{5,6,\dots ,n-1\}$, the definition of $x$ implies that $4\in Q.ux$, the last letter of the word $x$ is $b$ and the first letter of the word $y$ is $a$. Therefore the length of the suffix $y$ is at least $4$.

Since the length of $W$ is minimal, the last two letters of the word $W$ (and of the suffix $y$) are $ba$. Hence the last letter of the suffix  $y$ is an extra occurrence. Suppose that all other letters of $y$ occur within significant factors. Then $y=(a^2b^2)^ta$ for some $t$, but the equality $\delta(4,a^2b^2a)=2\ne0$ contradicts the choice of $W$ and the equality $\delta(4,a^2b^2a^2b)=5$ contradicts
the choice of $x$. Therefore, the suffix $y$ contains at least two extra occurrences.

Now we count extra occurrences in the word $x$. Let $x_1$ be the shortest prefix of the word $w$ such that $n-1\notin Q_n.ux_1$. This means that $n-2\in Q.ux_1$ and $x_1$ is a proper prefix of $x$. Let $x_2$ be such that $x=x_1x_2$. (We thus have $W=ux_1x_2y$.)

We notice that $p-1\leq \delta(p,\ell)\leq p+1$ for every state $p\in \{5,6,\dots ,n-1\}$ and every letter $\ell\in\Sigma$. Observe also that $X.a^2=X$ and $X.b^2=X$ for every $X\subseteq \{5,6,\dots ,n-1\}$. It means that the word $x_1$ contains at least $n-5$ extra occurrences and the word $x_2$ contains at least $n-6$ extra occurrences.

We conclude that the length of the word $W$ is at least
$$(2n-4)+2+(n-5)+(n-6)=4n-13.$$
This proves the theorem.
\end{proof}

It can be immediately verified that the words  $aba(abaabbab)^{\frac{n-5}{2}}aaba$ for each odd $n\geq 7$ and the words
$aba(abaabbab)^{\frac{n-8}{2}}ab^3(abaabbab)aaba$ for each even $n\geq 10$ are reset words for the corresponding automata
$\mathrsfs{A}_n$. This means that for $n=7,9,10,\dots $ the reset threshold of the automaton $\mathrsfs{A}_n$ is $4n-13$.
We have also verified that the reset threshold of the automaton $\mathrsfs{A}_8$ is $20$.

Observe that order of the letter $b$ in every automaton $\mathrsfs{A}_n$ is equal to $6$. Therefore, if we append to  $\mathrsfs{A}_n$ a tail of length $k=6\cdot\lceil\frac{n}{6}\rceil$, we obtain a binary \szan\ $\mathrsfs{B}_N$ with $N=n+k$ states and reset threshold
$$\frac{1}{4}N^2+2N+O(1).$$
In particular, if $n\equiv4\pmod6$, we can chose $k=n-4$ and get the desired series $\mathrsfs{B}_N$ with 
$$rt(\mathrsfs{B}_N)=\frac{1}{4}N^2+2 N-9.$$
We have thus proved

\begin{theorem}
For every $n\ge16$, $n\equiv4\pmod{12}$, there exists a binary \szan\ with $n$ states and reset threshold $\frac{1}{4}n^2+2n-9.$
\end{theorem}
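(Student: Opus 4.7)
The plan is to directly assemble two ingredients already in place: Theorem 3.1 (together with the explicit reset words stated immediately after its proof) for the almost permutation automata $\mathrsfs{A}_m$, and Vorel's lemma from Section 2. Only elementary bookkeeping is then needed.

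First I would verify that $\mathrsfs{A}_m$ fits the almost permutation framework with the roles of $a$ and $b$ interchanged, so that Vorel's lemma applies after relabelling: the letter $b$ of $\mathrsfs{A}_m$ plays the role of the permutation $a$ of the lemma, while the letter $a$ of $\mathrsfs{A}_m$ has pre-sink $1$ and plays the role of $b$ in the lemma. I would then compute the order of $b$ on $Q_m$: reading off the transition table, $b$ contains the cycle $(1\ 3\ 2)$ of length $3$, swaps consecutive even--odd pairs in $\{4,\dots,m-1\}$ (with a possible fixed point at $m-1$), and fixes $0$, so its order is $\mathrm{lcm}(2,3)=6$, independent of $m$.

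Next, I would invoke the explicit reset word $aba(abaabbab)^{(m-8)/2}ab^{3}(abaabbab)aaba$ stated for even $m\geq 10$ after the proof of Theorem 3.1; combined with the lower bound $rt(\mathrsfs{A}_m)\geq 4m-13$ from Theorem 3.1, this pins down $rt(\mathrsfs{A}_m)=4m-13$ exactly. Vorel's lemma, applied with a tail of length $k$ equal to any multiple of $6$ attached at the pre-sink $1$, then produces a binary $0$-automaton with $n=m+k$ states and reset threshold $(4m-13)+mk$.

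Finally, I would tune the free parameter $k$ to match the target formula. Choosing $k=m-4$ (legal precisely when $m\equiv 4\pmod 6$) yields $n=2m-4$ and
$$(4m-13)+m(m-4)=m^{2}-13=\left(\frac{n+4}{2}\right)^{2}-13=\frac{n^{2}}{4}+2n-9.$$
The divisibility $m\equiv 4\pmod 6$ transcribes to $n\equiv 4\pmod{12}$, and the hypothesis $n\geq 16$ corresponds to $m\geq 10$, exactly the range in which the explicit upper-bound reset word for even $m$ is valid. I do not anticipate a genuine obstacle: the whole argument reduces to arithmetic bookkeeping once Theorem 3.1 and Vorel's lemma are in hand, and the only subtleties are keeping track of the swap of roles of $a$ and $b$ between Sections~2 and~3 and checking that the divisibility constraint on $k$ matches the one on $n$ in the statement.
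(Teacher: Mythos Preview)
Your proposal is correct and follows essentially the same route as the paper: identify $\mathrsfs{A}_m$ as an almost permutation automaton (with the roles of $a$ and $b$ interchanged relative to Section~2), observe that the permutation letter $b$ has order~$6$, use Theorem~3.1 together with the explicit reset words to pin down $rt(\mathrsfs{A}_m)=4m-13$ for even $m\ge 10$, and then apply Vorel's lemma with tail length $k=m-4$ when $m\equiv 4\pmod 6$. The arithmetic and the translation of the congruence to $n\equiv 4\pmod{12}$ are exactly as in the paper; you have simply made the $a\leftrightarrow b$ relabelling and the bookkeeping more explicit.
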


We are sure that this new lower bound for reset threshold of binary \sza\ with $n$ states is not tight because our computational experiments have delivered further examples of almost permutation automata with reset threshold higher than $4n-13$ where $n$ is the number of states and these automata seem to extend to some infinite series. However, at the moment we cannot yet support this experimental evidence by rigorous proofs.

\small


\begin{thebibliography}{99}
\bibitem{An05}
D.S. Ananichev, The mortality threshold for partially monotonic automata, in C.\,de Felice, A.\,Restivo (eds.), Developments in Language Theory, Proc.\ 9th Int.\ Conf.\ DLT 2005 [Lect.\ Notes Comput.\ Sci., 3572], Springer-Verlag, Berlin--Heidelberg--New York, 2005, 112--121.

\bibitem{AV04a}
D. S. Ananichev, M. V. Volkov, Some results on \v{C}ern\'y type problems for transformation semigroups, in I.\,Araujo, M.\,Branco,
V.\,H.\,Fernandes, G.\,M.\,S.\,Gomes (eds.), Semigroups and Languages, World Scientific, Singapore, 2004, 23--42.

\bibitem{AV04b}
D. S. Ananichev, M. V. Volkov, Synchronizing monotonic automata, Theoret.\ Comput.\ Sci. 327 (2004) 225--239.

\bibitem{AV05}
D. S. Ananichev, M. V. Volkov, Synchronizing generalized monotonic automata, Theoret.\ Comput.\ Sci. 330 (2005) 3--13.

\bibitem{Ce64}
J. \v{C}ern\'{y}, Pozn\'{a}mka k homog\'{e}nnym eksperimentom s konecn\'{y}mi automatami, Mat.-Fyz.\ Cas.\ Slovensk.\ Akad.\ Vied.
14 (1964) 208--216 [in Slovak].

\bibitem{Du98}
L. Dubuc, Sur les automates circulaires et la conjecture de \v{C}ern\'{y}, RAIRO Inform.\ Theor.\ Appl. 32 (1998) 21--34 [in
French].

\bibitem{Ep90}
D. Eppstein, Reset sequences for monotonic automata, SIAM J. Comput. 19 (1990) 500--510.

\bibitem{Ka03}
J. Kari, Synchronizing finite automata on Eulerian digraphs, Theoret.\ Comput.\ Sci. 295 (2003) 223--232.

\bibitem{KW71}
Z. Kohavi, J. Winograd, Bounds on the length of synchronizing sequences and the order of information losslessness, in Z.\,Kohavi,
A.\,Paz (eds.), Theory of Machines and Computations, Academic Press, New York and London, 1971, 197--206.

\bibitem{Ma08}
P. Martyugin, A series of slowly synchronizing automata with zero state over a small alphabet. Inf. Comput. 19 (2009), 517--536.

\bibitem{Pri11}
E. V. Pribavkina, Slowly synchronizing automata with zero and noncomplete sets. Math. Notes 90 (2011), 411--417.

\bibitem{Ry97}
I. Rystsov, Reset words for commutative and solvable automata, Theoret.\ Comput.\ Sci. 172 (1997) 273--279.

\bibitem{Tr-JALC}
A. N. Trahtman, The \v{C}ern\'y conjecture for aperiodic automata, Discrete Math. Theoret. Comput. Sci. 9, No. 2 (2007), 3--10.

\bibitem{Vo15}
V. Vorel, Synchronization, Road Coloring, and Jumps in Finite Automata. Master Thesis, Charles University, Prague, 2015.
\end{thebibliography}
\end{document}